\newcommand\pd[2]{{#1}_{,{#2}}}							
\newcommand\td[2]{\frac{d{#1}}{d{#2}}}					
\newcommand\op{\prime}									
\newcommand{\bvec}[1]{\boldsymbol{#1}}					
\newcommand{\uvec}[1]{\boldsymbol{\delta}_{#1}}			
\newcommand{\un}{\boldsymbol{\delta}_{n}}					
\newcommand{\tensor}[1]{\boldsymbol{{#1}}}				
\newcommand{\dyad}[1]{\tensor{#1}}						
\newcommand{\udyad}{\dyad{\delta}}						
\newcommand{\dpi}{\dyad{\pi}}
\newcommand{\dtau}{\dyad{\tau}}
\newcommand{\dros}{\tensor{\dot{\gamma}}}			
\newcommand{\pdt}[2]{\frac{\partial{#1}}{\partial{#2}}}		
\newcommand{\bnabla}{\boldsymbol{\nabla}}
\newcommand{\bcdot}{\boldsymbol{\cdot}}
\newcommand{\grad}[1]{\bvec{\bnabla}{#1}}
\newcommand{\dive}[1]{\bvec{\bnabla}\bcdot{#1}}
\newcommand{\Laplacian}[1]{\Delta{#1}}
\newcommand{\bq}[1]{{#1}^{(0)}}					
\newcommand{\pq}[1]{{#1}^{(1)}}					
\newcommand{\spr}[2]{\bvec{{#1}}\bcdot\bvec{{#2}}}	
\newcommand{\ccj}[1]{\bar{#1}}					
\newcommand{\intInfT}{\int_{-\infty}^t}			
\newcommand\Real{\mbox{Re}} 
\newcommand\Imag{\mbox{Im}} 
\begin{document}
\title{Rayleigh-Taylor Instability in Viscoelastic Fluids}
\author{Amey Joshi\\Tata Consultancy Services,\\International Technology Park,\\Whitefield, Bangalore, 560 066\\}
\maketitle

\begin{abstract}
In this paper I analyze the onset of Rayleigh-Taylor instability between two linear viscoelastic fluids assuming that the perturbations at the interface are small. In the first half, the
paper analyzes a stratified viscoelastic fluid in which I prove that the perturbations rise or fall exponentially without oscillating. The second half of the paper examines the effect of
electric and magnetic fields on viscoelastic fluids. I show that it is possible to choose electric or magnetic field gradient such that the effective acceleration due to gravity is zero.
If a heavy Newtonian fluid rests on top of a lighter Newtonian fluid such a choice of field gradient would have rendered the arrangement stable. If the fluids are viscoelastic, I show 
that a similar arrangement is unstable. 
\end{abstract}

\section{Introduction}\label{sec:intro}
An arrangement of a heavy fluid on top of a lighter, incompressible fluid is kinematically possible. However, it is known to be unstable. Lord Rayleigh \cite{jws} first analyzed it for
inviscid fluids in 1883. Taylor showed that the nature of instability of a heavy fluid on top of a lighter one is the same as a lighter fluid accelerating against a heavier one. 
Chandrasekhar, in his treatise on hydrodynamic stability \cite{sc}, studies the same instability in Newtonian fluids. Joseph, Beavers and Funanda \cite{jbf} report an experiment in 
which they observed that the growth rates of the most unstable modes in a viscoelastic drop are higher than in drops of Newtonian fluids. One therefore suspects that the 
elasticity of fluids is responsible for making the drops more susceptible to break up. Joseph, Beavers and Funanda, in the same paper, provide an approximate analysis using the viscous
potential flow technique to explain the phenomenon for a drop of an Oldroyd-B fluid. In this paper, I shall extend Chandrasekhar's analysis \cite{sc} to linear viscoelastic fluids. 
The same analysis is applicable to quasi-linear viscoelastic fluids under the viscous potential flow approximation. This is because the Jaumann derivatives in the constitutive 
relations of quasi-linear models are identical to convective derivatives of linear models.

\section{Viscoelastic fluids subject to small perturbations}\label{sec:sp}
Constitutive relations of linear viscoelastic fluids can be written in an integral form as \cite{bah}
\begin{equation}\label{sp:e1}
\dtau = -\int_{-\infty}^{t}G(t - t^\op)\dros(t^\op)dt^\op,
\end{equation}
where $G(t - t^\op)$ is the relaxation modulus, $\dtau$ is deviatoric part of the total stress tensor $\dpi$ and $\dros$ is the rate of strain tensor. I shall denote tensors by 
boldface Greek symbols and vectors as boldface Roman symbols. The only exception being unit vectors, which shall be denoted as a bold delta with a subscript. For example, $\uvec{x}$ is
the unit vector in the $x$ direction. I shall also use a convention, popular in Physics, of writing the total stress tensor as,
\begin{equation}\label{sp:e2}
\dpi = p\udyad + \dtau,
\end{equation}
where $p$ is the pressure and $\udyad$ is a unit tensor or rank 2. The law of conservation of momentum for viscoelastic fluids (actually, for all continua) is given by Cauchy equation,
\begin{equation}\label{sp:e3}
\rho\dot{\bvec{u}} = -\dive{\tensor{\pi}} + \bvec{f}
\end{equation}
In this equation, $\rho$ is the density of the fluid, $\bvec{u}$ is the velocity of a fluid parcel and $\bvec{f}$ is the density of a body force. $\dot{\bvec{u}}$ is the convective 
derivative of $\bvec{u}$. Using equations (\ref{sp:e1}) and (\ref{sp:e2}) in equation (\ref{sp:e3}) and assuming that the only body force is that due to gravity, 
\begin{equation}\label{sp:e4}
\rho\left(\pdt{\bvec{u}}{t} + \bvec{u}\bcdot\grad{\bvec{u}}\right) = -\grad{p} + \int_{-\infty}^{t}G(t - t^\op)\dive{\dros}(t^\op)dt^\op + \rho\bvec{g}
\end{equation}
Let the velocity field be perturbed so that it is expressed as $\bvec{u} = \bq{\bvec{u}} + \pq{\bvec{u}}$, the superscript '(0)' indicating the flow without perturbation and the 
superscript '(1)' indicating perturbation itself. Let the pressure and density be written in a similar fashion. If the perturbations are small, quadratic and higher order terms in 
perturbed quantities can be ignored to give,
\begin{eqnarray}\label{sp:e5}
\bq{\rho}\left(\pdt{\pq{\bvec{u}}}{t} + \bq{\bvec{u}}\bcdot\grad{\pq{\bvec{u}}} + \pq{\bvec{u}}\bcdot\grad{\bq{\bvec{u}}}\right) &  &	\nonumber		\\
+ \pq{\rho}\left(\pdt{\bq{\bvec{u}}}{t} + \bq{\bvec{u}}\bcdot\grad{\bq{\bvec{u}}}\right) &=& -\grad{\pq{p}} + \pq{\rho}\bvec{g} 	\\
 & & + \int_{-\infty}^{t}G(t - t^\op)\dive{\pq{\dros}}dt^\op \nonumber 
\end{eqnarray}
where $\pq{\dros}$ is the rate of strain tensor due to perturbation alone. Since the fluid is initially at rest, $\bq{\bvec{u}} = 0$ and equation (\ref{sp:e5}) gives,
\begin{equation}\label{sp:e6}
\bq{\rho}\pdt{\pq{\bvec{u}}}{t} = -\grad{\pq{p}} + \int_{-\infty}^{t}G(t - t^\op)\dive{\pq{\dros}}dt^\op + \pq{\rho}\bvec{g}
\end{equation}
If the fluid is incompressible, $\dive{\pq{\dros}} = \Laplacian{\pq{\bvec{u}}}$, $\Delta$ being the Laplacian operator. Therefore,
\begin{equation}\label{sp:e7}
\bq{\rho}\pdt{\pq{\bvec{u}}}{t} = -\grad{\pq{p}} + \int_{-\infty}^{t}G(t - t^\op)\Laplacian{\pq{\bvec{u}}}(t^\op)dt^\op + \pq{\rho}\bvec{g},
\end{equation}
Assume a normal mode expansion of the form $\pq{\bvec{u}} = \bvec{a}\exp(nt+i\spr{k}{x})$, where $n$ is complex and only the real part of the expression is physically significant. The
perturbation in pressure field is similarly written as $\pq{p} = b\exp(nt+i\spr{k}{x})$. Substituting it in the above equation we get
\begin{equation}\label{sp:e8}
n\bq{\rho}\pq{\bvec{u}} = -\bvec{k}\pq{p} - k^2\int_{-\infty}^{t}G(t - t^\op)\pq{\bvec{u}}(t^\op)dt^\op + \pq{\rho}\bvec{g}
\end{equation}
Equation (\ref{sp:e8}) can be written in a simpler form as
\begin{equation}\label{sp:e9}
n\bq{\rho}\pq{\bvec{u}} = -\bvec{k}\pq{p} - k^2\ccj{\eta}\pq{\bvec{u}} + \pq{\rho}\bvec{g}
\end{equation}
where the complex viscosity, $\ccj{\eta}$ \cite{bah} (overhead bar denotes the complex conjugate) is defined as
\begin{equation}\label{sp:e10}
\ccj{\eta} = \int_{0}^{\infty}G(s)\exp(-ns)ds
\end{equation}
If I had assumed the fluid to be Newtonian, I would have arrived at a similar equation except that instead of the of the complex viscosity $\ccj{\eta}$, I would have had $\mu$, the
viscosity at zero shear rate. Thus the entire machinery, developed for analyzing Rayleigh-Taylor instability in Newtonian fluids, can be reused for general viscoelastic fluids be 
replacing $\mu$ with $\ccj{\eta}$. 

\section{Linear analysis}\label{sec:la}
We can include the effect of surface tension in equation (\ref{sp:e7}) by adding a term $-(\gamma\dive{\pq{\un}})\uvec{z}\delta(z-z_0)$, where $\gamma$ is the interfacial tension, 
$\pq{\un}$ is the normal to the perturbed vector, $\delta(\cdot)$ is the Dirac delta function and the unperturbed interface is at $z=z_0$ , to get
\begin{eqnarray}
\bq{\rho}\pdt{\pq{\bvec{u}}}{t} &=& -\grad{\pq{p}} + \int_{-\infty}^{t}G(t - t^\op)\Delta\pq{\bvec{u}}(t^\op)dt^\op + \pq{\rho}{\bvec{g}} - \nonumber \\ 
							& & (\gamma\dive{\pq{\un}})\uvec{z}\delta(z-z_0)	\label{la:e1}
\end{eqnarray}
If the interface is perturbed by a quantity $\xi$ then 
\begin{equation}\label{la:e1a}
\pq{\un} = -\frac{\pd{\xi}{x}\uvec{x} + \pd{\xi}{y}\uvec{y} - \uvec{z}}{\sqrt{\pd{\xi}{x}^2 + \pd{\xi}{y}^2 + 1}}
\end{equation}
where the subscript notation for partial derivatives is used. That is, $\pd{\xi}{x}$ is the partial derivative of $\xi$ with respect to $x$. Let us assume that the density of the fluid 
varies only in the $z$-direction. Let the components of perturbed velocity field $\pq{\bvec{u}}$ be $(\pq{u},\pq{v},\pq{w})$. Therefore the equation of continuity, after taking into 
account the incompressibility of the fluid, is
\begin{equation}\label{la:e2}
\pq{\pdt{\rho}{t} }+ \pq{w}\bq{\pdt{\rho}{z}} = 0
\end{equation}
The incompressibility condition, on its own, reads
\begin{equation}\label{la:e3}
\pq{\pdt{u}{x}} + \pq{\pdt{v}{y}} + \pq{\pdt{w}{z}} = 0
\end{equation}
Assuming a normal mode expansion of the form $\exp(nt+i\spr{k}{x})$, where $n$ is complex and only the real part of the expression is physically significant and writing 
equation (\ref{la:e1}) in component form
\begin{eqnarray}
\bq{\rho}n\pq{u} &=& -ik_x\pq{p} + \intInfT G(t-t^\op)(D^2 - k^2)\pq{u}(t^\op)dt^\op			\label{la:e4}	\\
\bq{\rho}n\pq{v} &=& -ik_y\pq{p} + \intInfT G(t-t^\op)(D^2 - k^2)\pq{v}(t^\op)dt^\op			\label{la:e5}	\\
\bq{\rho}n\pq{w} &=& -D\pq{p} + \intInfT G(t-t^\op)(D^2 - k^2)\pq{w}(t^\op)dt^\op 	\nonumber		\\ 
			   & & -\pq{\rho}{g} - k^2\gamma\pq{z}\delta(z-z_0)	\label{la:e6}
\end{eqnarray}
Similarly, equation (\ref{la:e3}) becomes
\begin{equation}\label{la:e7}
ik_x\pq{u} + ik_y\pq{v} + D\pq{w} = 0,
\end{equation}
where $D\equiv d/dz$ and equation (\ref{la:e2}) becomes
\begin{equation}\label{la:e2a}
n\pq{\rho} + \pq{w}D\bq{\rho} = 0
\end{equation}
Multiplying equation (\ref{la:e4}) by $-ik_x$, equation (\ref{la:e5}) by $-ik_y$, adding them and using $(\ref{la:e7})$ in the result, we get
\begin{equation}\label{la:e8}
\bq{\rho}nD\pq{w} = -k^2\pq{p} + \intInfT G(t-t^\op)(D^2 - k^2)D\pq{w}dt^\op,
\end{equation}
where $k^2 = (k_x^2 + k_y^2)$. Writing in terms of complex viscosity\footnote{Proved in the appendix.},
\begin{equation}\label{la:e8a}
\bq{\rho}nD\pq{w} = -k^2\pq{p} + \ccj{\eta}(D^2 - k^2)D\pq{w}
\end{equation}
Using equation (\ref{la:e2a}) we can express $\pq{\rho}$ as $-\pq{w}D\bq{\rho}/n$ in (\ref{la:e6}). If we also write the result in terms of complex viscosity, we get
\begin{equation}\label{la:e9}
\bq{\rho}n\pq{w} = -D\pq{p} + \ccj{\eta}(D^2 - k^2)\pq{w} + \frac{{g}\pq{w}D\bq{\rho}}{n} - k^2\gamma\pq{z}\delta(z-z_0)
\end{equation}
Let the fluid be confined between two rigid boundaries $z=0$ and $z=d$. Since the tangential and normal velocities vanish at solid boundaries, we have $\pq{w}=0$ and $D\pq{w}=0$ at 
both, $z=0$ and $z=d$. Equations (\ref{la:e8a}) and (\ref{la:e9}) are coupled eigenvalue relations for the functions $\pq{w}$ and $\pq{p}$ with eigenvalue $n$. Let $n_i$ and $n_j$ be two 
eigenvalues with the corresponding eigenfunctions $(\pq{w}_i, \pq{p}_i)$ and $(\pq{w}_j, \pq{p}_j)$. For the eigenvalue $n_i$, equation (\ref{la:e9}) is,
\begin{equation}\label{la:e10}
D\pq{p}_i = -n_i\bq{\rho}\pq{w_i} + \ccj{\eta}(D^2 - k^2)\pq{w}_i + \frac{{g}\pq{w}_iD\bq{\rho}}{n_i} - k^2\gamma\pq{z}\delta(z-z_0)
\end{equation}
Since 
\begin{equation}\label{la:e10a}
\pq{w}_i = \td{\pq{z}}{t},
\end{equation}
the normal mode expansion of $\pq{z}$ gives $\pq{w}_i = \pq{z}n_i$. Therefore,
\begin{equation}\label{la:e10b}
D\pq{p}_i = -n_i\bq{\rho}\pq{w_i} + \ccj{\eta}(D^2 - k^2)\pq{w}_i + \frac{{g}\pq{w}_iD\bq{\rho}}{n_i} - k^2\gamma\frac{\pq{w}_i}{n_i}\delta(z-z_0)
\end{equation}
Multiply by $\pq{w}_j$ and integrate between $z=0$ and $z=d$, to get
\begin{eqnarray}\label{la:e11}
\int_0^d \pq{w}_j D\pq{p}_i dz	&=& -n_i\int_0^d\bq{\rho}\pq{w_i}\pq{w}_jdz + \ccj{\eta}\int_0^d[(D^2 - k^2)\pq{w}_i]\pq{w}_jdz + \nonumber		\\
							& & \frac{{g}}{n_i}\int_0^d \pq{w}_j\pq{w}_iD\bq{\rho}dz - \frac{k^2\gamma}{n_i}\int_0^d \pq{w}_j\pq{w}_i\delta(z-z_0)dz	\nonumber 	\\
							& &	
\end{eqnarray}
The boundary conditions on $\pq{w}_j$ allow us to write the left hand side as
\begin{equation}\nonumber
-\int_0^d \pq{p}_iD\pq{w}_j dz
\end{equation}
which, after substituting for $\pq{p}_i$ from equation (\ref{la:e8a}) becomes
\begin{equation}\nonumber
-\int_0^d \left\{\left[\frac{-n_i\bq{\rho}}{k^2} + \frac{\ccj{\eta}}{k^2}(D^2 - k^2)\right]D\pq{w}_i\right\} D\pq{w}_j dz = I_1 + I_2,
\end{equation}
where
\begin{eqnarray}
I_1 &=& \int_0^d \left[\frac{n_i\bq{\rho}}{k^2} + \ccj{\eta}\right]D\pq{w}_i D\pq{w}_j dz	\nonumber		\\
I_2 &=& -\int_0^d \left\{\frac{\ccj{\eta}}{k^2} D^2(D\pq{w}_i)\right\}D\pq{w}_jdz
\end{eqnarray}
The form of $I_2$ can be simplified after integrating by parts to
\begin{equation}\nonumber
I_2 = \frac{\ccj{\eta}}{k^2} \int_0^d (D^2\pq{w}_i)(D^2\pq{w}_j)dz
\end{equation}
Equation (\ref{la:e11}), therefore becomes
\begin{eqnarray}
\int_0^d \left[\frac{n_i\bq{\rho}}{k^2} + \ccj{\eta}\right]D\pq{w}_i D\pq{w}_j dz + \frac{\ccj{\eta}}{k^2} \int_0^d (D^2\pq{w}_i)(D^2\pq{w}_j)dz &=&	\nonumber		\\
-n_i\int_0^d\bq{\rho}\pq{w_i}\pq{w}_jdz + \ccj{\eta}\int_0^d[(D^2 - k^2)\pq{w}_i]\pq{w}_jdz  & &		\nonumber		\\
+ \frac{{g}}{n_i}\int_0^d \pq{w}_j\pq{w}_iD\bq{\rho}dz - \frac{k^2\gamma}{n_i}\int_0^d \pq{w}_j\pq{w}_i\delta(z-z_0)dz & &
\end{eqnarray}
Therefore,
\begin{eqnarray}
-n_i\int_0^d\left\{\bq{\rho}\pq{w_i}\pq{w}_j + \frac{\bq{\rho}}{k^2}D\pq{w}_i D\pq{w}_j\right\}dz + \frac{{g}}{n_i}\int_0^d \pq{w}_j\pq{w}_iD\bq{\rho}dz & & \nonumber 		\\
 - \frac{k^2\gamma}{n_i}\int_0^d \pq{w}_j\pq{w}_i\delta(z-z_0)dz &=&	\nonumber		\\
\ccj{\eta}\int_0^d\left\{D\pq{w}_i D\pq{w}_j + \frac{1}{k^2}(D^2\pq{w}_i)(D^2\pq{w}_j) - ((D^2 - k^2)\pq{w}_i)\pq{w}_j\right\}dz	\nonumber		\\
& &
\end{eqnarray}
Simplifying the last term on the right hand side and integrating the result by parts we have,
\begin{eqnarray}\label{la:e12}
-n_i\int_0^d\left\{\bq{\rho}\pq{w_i}\pq{w}_j + \frac{\bq{\rho}}{k^2}D\pq{w}_i D\pq{w}_j\right\}dz + \frac{{g}}{n_i}\int_0^d \pq{w}_j\pq{w}_iD\bq{\rho}dz & & \nonumber 		\\
 - \frac{k^2\gamma}{n_i}\int_0^d \pq{w}_j\pq{w}_i\delta(z-z_0)dz &=&	\nonumber		\\
\ccj{\eta}\int_0^d\left\{k^2\pq{w}_i\pq{w}_j + 2D\pq{w}_i D\pq{w}_j + \frac{1}{k^2}(D^2\pq{w}_i)(D^2\pq{w}_j)\right\}dz	\nonumber		\\
& &
\end{eqnarray}
One form of this equation is
\begin{eqnarray}\label{la:e12a}
{g}\int_0^d \pq{w}_j\pq{w}_iD\bq{\rho}dz - k^2\gamma\int_0^d \pq{w}_j\pq{w}_i\delta(z-z_0)dz &=& \nonumber	  \\
n_i^2\int_0^d\left\{\bq{\rho}\pq{w_i}\pq{w}_j + \frac{\bq{\rho}}{k^2}D\pq{w}_i D\pq{w}_j\right\}dz +	& & \nonumber		\\
n_i\ccj{\eta}\int_0^d\left\{k^2\pq{w}_i\pq{w}_j + 2D\pq{w}_i D\pq{w}_j + \frac{(D^2\pq{w}_i)(D^2\pq{w}_j)}{k^2}\right\}dz & &	\nonumber		\\
 & &
\end{eqnarray}
Interchanging $i$ and $j$,
\begin{eqnarray}\label{la:e13}
{g}\int_0^d \pq{w}_j\pq{w}_iD\bq{\rho}dz - k^2\gamma\int_0^d \pq{w}_j\pq{w}_i\delta(z-z_0)dz &=& \nonumber	  \\
 n_j^2\int_0^d\left\{\bq{\rho}\pq{w_i}\pq{w}_j + \frac{\bq{\rho}}{k^2}D\pq{w}_i D\pq{w}_j\right\}dz + & &	\nonumber		\\
 n_j\ccj{\eta}\int_0^d\left\{k^2\pq{w}_i\pq{w}_j + 2D\pq{w}_i D\pq{w}_j + \frac{(D^2\pq{w}_i)(D^2\pq{w}_j)}{k^2}\right\}dz & &	\nonumber		\\
 & &
\end{eqnarray}
If $n_j=\ccj{n}_i$, then
\begin{eqnarray}\label{la:e14}
{g}\int_0^d|\pq{w}_i|^2D\bq{\rho}dz - k^2\gamma\int_0^d|\pq{w}_i|^2\delta(z-z_0)dz &=& \nonumber	  \\
 (\ccj{n}_i)^2\int_0^d\left\{\bq{\rho}|\pq{w}_i|^2+\frac{\bq{\rho}}{k^2}|D\pq{w}_i|^2\right\}dz + & & \nonumber	 \\
 \ccj{n}_i\eta\int_0^d\left\{k^2|\pq{w}_i|^2 + 2|D\pq{w}_i|^2 + \frac{|D^2\pq{w}_i|^2}{k^2}\right\}dz & &	\nonumber		\\
 & &
\end{eqnarray}
Under the same assumptions, equation (\ref{la:e12a}) becomes
\begin{eqnarray}\label{la:e15}
{g}\int_0^d|\pq{w}_i|^2D\bq{\rho}dz - k^2\gamma\int_0^d|\pq{w}_i|^2\delta(z-z_0)dz &=& \nonumber	  \\
 (n_i)^2\int_0^d\left\{\bq{\rho}|\pq{w}_i|^2+\frac{\bq{\rho}}{k^2}|D\pq{w}_i|^2\right\}dz + & & \nonumber	 \\
 n_i\ccj{\eta}\int_0^d\left\{k^2|\pq{w}_i|^2 + 2|D\pq{w}_i|^2 + \frac{|D^2\pq{w}_i|^2}{k^2}\right\}dz & &	\nonumber		\\
 & &
\end{eqnarray}
Subtracting equation (\ref{la:e14}) from equation (\ref{la:e15}),
\begin{eqnarray}\label{la:e16}
(n_i^2 - (\ccj{n}_i)^2)\int_0^d\left\{\bq{\rho}|\pq{w}_i|^2+\frac{\bq{\rho}}{k^2}|D\pq{w}_i|^2\right\}dz &=&	\nonumber 	\\ 
(\ccj{n}_i\eta - n_i\ccj{\eta})\int_0^d\left\{k^2|\pq{w}_i|^2 + 2|D\pq{w}_i|^2 + \frac{|D^2\pq{w}_i|^2}{k^2}\right\}dz & &
\end{eqnarray}
The integrands in the above equation are positive definite and so are the limits, therefore the integrals themselves are positive. We can write equation (\ref{la:e16}) in a simpler
form as
\begin{equation}\label{la:e17}
(n_i^2 - (\ccj{n}_i)^2)I_3 = (\ccj{n}_i\eta - n_i\ccj{\eta})I_4,
\end{equation}
where $I_3 > 0$, $I_4 > 0$ and
\begin{eqnarray}
I_3 &=& \int_0^d\left\{\bq{\rho}|\pq{w}_i|^2+\frac{\bq{\rho}}{k^2}|D\pq{w}_i|^2\right\}dz	\label{la:e18} \\
I_4 &=& \int_0^d\left\{k^2|\pq{w}_i|^2 + 2|D\pq{w}_i|^2 + \frac{|D^2\pq{w}_i|^2}{k^2}\right\}dz	\label{la:e19}
\end{eqnarray}
If $n_i = a + bi$ then equation (\ref{la:e17}) immediately gives,
\begin{equation}\label{la:e20}
b = \frac{a\Imag(\eta)I_4}{2a I_3 + \Real(\eta)I_4}
\end{equation}
We obtained equation (\ref{la:e20}) by manipulating a form of equation (\ref{la:e12}). If we continue with that equation in its original form and interchange $i$ and $j$, we get
\begin{eqnarray}\label{la:e21}
-n_j\int_0^d\left\{\bq{\rho}\pq{w}_j\pq{w}_i + \frac{\bq{\rho}}{k^2}D\pq{w}_j D\pq{w}_i\right\}dz + \frac{{g}}{n_j}\int_0^d \pq{w}_i\pq{w}_jD\bq{\rho}dz & & \nonumber 		\\
 - \frac{k^2\gamma}{n_j}\int_0^d \pq{w}_i\pq{w}_j\delta(z-z_0)dz &=&	\nonumber		\\
\ccj{\eta}\int_0^d\left\{k^2\pq{w}_j\pq{w}_i + 2D\pq{w}_j D\pq{w}_i + \frac{1}{k^2}(D^2\pq{w}_j)(D^2\pq{w}_i)\right\}dz	\nonumber		\\
& &
\end{eqnarray}
Subtracting it from equation (\ref{la:e12}), we get
\begin{eqnarray}\label{la:e22}
(n_j - n_i)\int_0^d\left\{\bq{\rho}\pq{w}_j\pq{w}_i + \frac{\bq{\rho}}{k^2}D\pq{w}_j D\pq{w}_i\right\}dz + & & 	\nonumber 	\\
\left(\frac{{g}}{n_i} - \frac{{g}}{n_j}\right)\int_0^d \pq{w}_i\pq{w}_jD\bq{\rho}dz - & & \nonumber 	\\
\left(\frac{k^2\gamma}{n_i} - \frac{k^2\gamma}{n_j}\right)\int_0^d \pq{w}_i\pq{w}_j\delta(z-z_0)dz &=& 0
\end{eqnarray}
If we now choose $n_j = \ccj{n}_i$, we can write equation (\ref{la:e22}) as 
\begin{equation}\label{la:e23}
(\ccj{n}_i - n_i)I_5 + \left(\frac{{g}}{n_i} - \frac{{g}}{\ccj{n}_i}\right)I_6 - \left(\frac{k^2\gamma}{n_i} - \frac{k^2\gamma}{\ccj{n}_i}\right)I_7 = 0,
\end{equation}
where
\begin{eqnarray}
I_5 &=& \int_0^d\left\{\bq{\rho}|\pq{w}_i|^2 + \frac{\bq{\rho}}{k^2}|D\pq{w}_i|^2\right\}dz > 0	\label{la:e24}	\\
I_6 &=& \int_0^d |\pq{w}_i|^2 D\bq{\rho}dz \label{la:e25}	\\
I_7 &=& \int_0^d |\pq{w}_i|^2 \delta(z-z_0)dz > 0 \label{la:e26}	
\end{eqnarray}
The sign of $I_6$ depends on the density gradient. It is positive if density increases with increasing $z$ and is negative if density decreases with increasing $z$. We can simplify 
(\ref{la:e23}) to get
\begin{equation}\label{la:e27}
(\ccj{n}_i - n_i)\left(I_5 + \frac{{g}I_6}{|n_i|^2} - \frac{k^2\gamma I_7}{|n_i|^2}\right) = 0
\end{equation}

\subsection{Conclusions of the analysis}\label{sec:coa}
The main results of the analysis of section \ref{sec:la} are equations (\ref{la:e17}) and (\ref{la:e27}). Writing them together once again -
\begin{eqnarray}
(n_i^2 - (\ccj{n}_i)^2)I_3 &=& (\ccj{n}_i\eta - n_i\ccj{\eta})I_4 \label{la:e17r}	\\
(\ccj{n}_i - n_i)\left(I_5 + \frac{{g}I_6}{|n_i|^2} - \frac{k^2\gamma I_7}{|n_i|^2}\right) &=& 0 \label{la:e28r}		
\end{eqnarray}

\begin{enumerate}
\item If the fluids were Newtonian instead of viscoelastic, $\eta = \ccj{\eta}$ and equation (\ref{la:e17}) will imply
\begin{equation}\label{coa:e1}
\Real{(n_i)} = -I_4/I_3 < 0
\end{equation}
Equation (\ref{la:e17}) was derived under the assumption that $n_i$ was complex. Therefore, equation (\ref{coa:e1}) tells us that oscillatory modes, if they exist, are stable. It tallies 
with Chandrasekhar's \cite{sc} inference from his equation (86) of section (93). 

\item Suppose the fluids are viscoelastic and $I_6 > 0$, that is the density of fluid rises with $z$. Equation (\ref{la:e27}) implies that one of the following equations are true.
\begin{eqnarray}
(\ccj{n}_i - n_i) &=& 0 \label{la:e28}		\\
\left(I_5 + \frac{{g}I_6}{|n_i|^2} - \frac{k^2\gamma I_7}{|n_i|^2}\right) &=& 0 \label{la:e29}
\end{eqnarray}
If there were no surface tension, as in the case of a stratified fluid, equation (\ref{la:e29}) cannot be true. Therefore, the only way to satisfy equation (\ref{la:e27}) is to insist on 
truth of equation (\ref{la:e28}). Which means that if $I_6 > 0$, $n_i$ is real. 

\item Suppose the fluids are viscoelastic and $I_6 < 0$. Then equation (\ref{la:e29}) can be true and hence
\begin{equation}\label{coa:e2}
|n_i|^2 = (a^2 + b^2) = \left(\frac{k^2\gamma I_7 - {g}I_6}{I_5}\right)
\end{equation}
We now use the form of $b$ derived in equation (\ref{la:e20}) to get a quadric equation in $a$,
\begin{eqnarray}\label{coa:e3}
a^4I^2_3 + a^3I_3I_4\Real(\eta) + \left(|\eta|^2\frac{I^2_4}{4} - |n_i|^2I^2_3\right)a^2 - & & \nonumber 	\\
|n_i|^2I_3I_4\Real(\eta)a - \frac{|n_i|^2\Real^2(\eta)I^2_4}{4} &=& 0,
\end{eqnarray}
Since $|n_i|^2 > 0$, by Descartes' rule of signs, mentioned in appendix to this paper, the equation (\ref{coa:e3}) has at least one positive root. Therefore, oscillatory modes could be 
unstable. However, on physical grounds, we know that perturbations in a fluid stratified such that it is heavier at the bottom will always fade. Therefore, even in the case $I_6 < 0$, the
$n_i$ has to be real. Thus, there are no oscillatory modes.
\end{enumerate}

\section{Effect of electric and magnetic field}\label{eem}
I demonstrate the effect of gradient of an applied electric or magnetic field on the fluid's specific weight. To do so, I employ the example of linear analysis of instability of an 
interface between perfect fluids. Developing the framework for perfect fluids allows us to focus on the effect of fields without getting obscured by the peculiarities of non-
Newtonian constitutive relations. I shall also demonstrate that the mathematical structure of the problem is similar for the case of the electric and the magnetic field and that a single 
framework that analyzes both can be developed. 

Let us consider the stability of the interface between perfect fluids in presence of electric and magnetic fields. Let $\rho_1$ and $\rho_2$ be densities of lower and upper fluids 
respectively. Let both of them have a uniform base velocity $\bq{\bvec{u}}_1 = U_1\uvec{x}$ and $\bq{\bvec{u}}_2 = U_2\uvec{x}$ respectively, in the $X$ direction\footnote{Since I am 
dealing with ideal fluids, obeying Euler equations, any smooth function can represent a fully-developed steady flow.}. Let their interface be at $z = 0$. 

The mechanical energy of perfect fluids with irrotational flow, in presence of external conservative forces, can be described by the Bernoulli equation
\begin{equation}\label{eem:e17}
-\frac{\partial\phi}{\partial{t}} + \frac{u^2}{2} + \frac{p}{\rho} + \Phi = \mathcal{F}(t)
\end{equation}
where $\mathcal{F}(t)$ is an arbitrary function of time, $\Phi$ is potential due to external fields, ${\bvec{u}} = -\grad{\phi}$, $\phi$ is the velocity potential. The unperturbed velocity 
potential for lower fluid is $-U_1x$. Since the fluids are perfect and incompressible, Kelvin's vorticity theorem assures that if the flow was irrotational to begin with, the velocities 
induced because of perturbations will also be irrotational\cite{arh}. Let the perturbed potential be 
\begin{equation}\label{eem:e18}
\phi_1 = \bq{\phi}_1 + \pq{\phi}_1 = -Ux + \pq{\phi}_1
\end{equation}

Since the fluids are incompressible, $\Delta\phi_1 = 0$. Therefore equation (\ref{eem:e18}) implies that $\Delta\pq{\phi}_1 = 0$. Likewise, if $\pq{\phi}_2$ is the perturbed potential of 
the upper fluid, then $\Delta\pq{\phi}_2 = 0$. In terms of velocity potential of the lower fluid, the velocity of perturbed surface is 
\begin{equation}\label{eem:e19}
\pq{\dot{\xi}} = -\pq{\phi}_{1,z}
\end{equation}
where $\pq{\xi}$ is the $z$ coordinate of the perturbed interface, the overhead dot denotes the time derivative. Similar equation holds good for the upper fluid. An arbitrary perturbation 
of the surface and the corresponding driving velocity potentials can be expanded in terms of normal modes as
\begin{eqnarray}
\pq{\xi} &=& A\exp(nt + ikx)	\label{eem:e20}	\\
\pq{\phi}_{1} &=& C_1\exp(nt + ikx + kz)	\label{eem:e21}	\\
\pq{\phi}_{2} &=& C_2\exp(nt + ikx - kz)	\label{eem:e22}
\end{eqnarray}
The perturbations are written in a manner so that they vanish away from the interface. Substituting equations (\ref{eem:e20}) to (\ref{eem:e22}) in (\ref{eem:e19}) and its analog for 
upper fluid, at $z = 0$.
\begin{eqnarray}
-kC_1 &=& nA + ikU_1A	\label{eem:e23}	\\
kC_2 &=& nA + ikU_2A 	\label{eem:e24}
\end{eqnarray}
The third equation for finding the unknowns in equations (\ref{eem:e20}) to (\ref{eem:e22}) follows from the continuity of pressure across the interface. Assuming a surface tension 
$\gamma$ at the interface, we have at $z=0$, $p_1 = p_2 - \gamma\pq{\xi}_{,xx}$, where $p_1$ and $p_2$ are total pressure in lower and upper fluids respectively.  For an incompressible 
dielectric fluid in electric and gravitational fields, 
\begin{equation}\label{eem:e25}
\Phi_1 = \Phi_{1e} = g\xi + \frac{\epsilon_0}{2\rho_1}E_1^{2}\left(\frac{\partial{\kappa_1}}{\partial{\rho_1}}\right)_{\theta}
\end{equation}
and for a ferro-fluid in magnetic and gravitational fields,
\begin{equation}\label{eem:e26}
\Phi_1 = \Phi_{1m} = g\xi + \frac{\mu_0}{2\rho_1}H_1^{2}\left(\frac{\partial{\kappa_{1m}}}{\partial{\rho_1}}\right)_{\theta}
\end{equation}
where $\theta$ is the temperature. The second terms on the right hand sides of equations (\ref{eem:e25}) and (\ref{eem:e26}) arise due to electro and magneto striction forces \cite{pp}. 
Since the form of equations (\ref{eem:e25}) and (\ref{eem:e26}) is similar, I will use one of them, the magnetic one, in the analysis and merely quote the results for the electric case. 
Using equation (\ref{eem:e26}) in equation (\ref{eem:e17}) and using the boundary condition on pressure,  
\begin{eqnarray}\label{eem:e27}
\rho_1\left(-\pdt{\phi_1}{t}+\frac{v_1^2}{2}+g\pq{\xi}+K_{1m}^2 H_1^2\right) &=& \rho_2\left(-\pdt{\phi_2}{t}+\frac{v_2^2}{2}+g\pq{\xi}+K_{2m}^2 H_2^2\right)+ \nonumber 	\\
 & & \mathcal{F}(t) - \gamma\frac{\partial^2 \pq{\xi}}{\partial x^2}
\end{eqnarray}
where $\mathcal{F}(t) = \rho_1{F_1(t)} - \rho_2{F_2(t)}$ and the constants $K_{1m}$ and $K_{2m}$ are given by
\begin{eqnarray}
K_{1m}^2 &=& \frac{\mu_0}{2\rho_1}\left(\frac{\partial\kappa_{1m}}{\partial\rho_1}\right)_{\theta}	\\	\label{eem:e27a}
K_{2m}^2 &=& \frac{\mu_0}{2\rho_2}\left(\frac{\partial\kappa_{2m}}{\partial\rho_2}\right)_{\theta}		\label{eem:e27b}
\end{eqnarray}
Under unperturbed conditions, with $v_1 = U_1$, $v_2 = U_2$, $\pq{\phi}_1 = 0$, $\pq{\phi}_2 = 0$ and $\pq{\xi} = 0$, equation (\ref{eem:e27}) becomes
\begin{equation}\label{eem:e28}
\rho_1\left[\frac{U_1^{2}}{2} + {K_{1m}^2}H^2(0^{-})\right] = \rho_2\left[\frac{U_2^{2}}{2} + {K_{2m}^2}H^2(0^{+})\right] + \mathcal{F}(t) 
\end{equation}
where $\bvec{H}(0^{-})$ and $\bvec{H}(0^{+})$ are magnetizing fields just below and above the interface. Substituting $\mathcal{F}(t)$ from the above equation in (\ref{eem:e27}), 
noting that $v_1=U_1+\pq{v}_1$, $v_2=U_2+\pq{v}_2$ and using the linear (small amplitude) approximation,
\begin{equation}\label{eem:e29}
\rho_1\left[-\pq{\phi}_{1,t} - U_1\pq{\phi}_{1,x} + \breve{g}_{1m}\pq{\xi}\right] = \rho_2\left[-\pq{\phi}_{2,t} - U_2\pq{\phi}_{2,x} + \breve{g}_{2m}\pq{\xi}\right]
\end{equation}
where the "effective acceleration due to gravity" in each fluid is given by
\begin{equation}\label{eem:e30}
\breve{g}_{1m} = g - {K_{1m}^2}H(0^{-})H_{1,z}
\end{equation}
and
\begin{equation}\label{eem:e31}
\breve{g}_{2m} = g - {K_{2m}^2}H(0^{+})H_{2,z}
\end{equation}
Using equations (\ref{eem:e20}) to (\ref{eem:e22}) in (\ref{eem:e29}) we get the third equation in the unknowns $A, C$ and $C^{\op}$ from which we get the dispersion relation
\begin{equation}\label{eem:e32}
\frac{in}{k} = \frac{\rho_1 U_1 + \rho_2 U_2}{\rho_1 + \rho_2} \pm \left[\frac{\rho_1\breve{g}_{1m} - \rho_2\breve{g}_{2m}}{k(\rho_1 + \rho_2)} + \frac{k\gamma}{\rho_1 + 
\rho_2} - \frac{\rho_1\rho_2(U_1 - U_2)^2}{(\rho_1 + \rho_2)^2}\right]^{1/2}
\end{equation}

In absence of magnetic field $\breve{g}_{1m} = \breve{g}_{2m} = g$ and the dispersion relation reduces to the one on Lamb's treatise \cite{hl}. Thus the effect of an applied magnetic field 
gradient is to alter the specific weight $\rho\breve{g}$. By choosing the value of the gradient $\pd{H}{z}$, we can effectively increase or decrease specific weight to our advantage.

We get the relations for electric field gradients from equations (\ref{eem:e30}) and (\ref{eem:e31}) by replacing $\mu_0$ by $\epsilon_0$, $\kappa_m$ by $\kappa$, $K_m$ by $K_e$ and $H$ 
by $E$. They are
\begin{equation}\label{eem:e30a}
\breve{g}_{1e} = g - {K_{1e}^2}E(0^{-})E_{1,z}
\end{equation}
and
\begin{equation}\label{eem:e31a}
\breve{g}_{2e} = g - {K_{2e}^2}E(0^{+})E_{2,z}
\end{equation}
The constants $K_{1e}$ and $K_{2e}$, given by equations (\ref{eem:e30b}) and (\ref{eem:e31b}), depend only on the molecular and bulk properties of the fluids. 
\begin{eqnarray}
K_{1e}^2 &=& \frac{\epsilon_0}{2\rho_1}\left(\frac{\partial\kappa_{1e}}{\partial\rho_1}\right)_{\theta}	\\	\label{eem:e30b}
K_{2e}^2 &=& \frac{\epsilon_0}{2\rho_2}\left(\frac{\partial\kappa_{2e}}{\partial\rho_2}\right)_{\theta}		\label{eem:e31b}
\end{eqnarray}
The dispersion relation, likewise, will be 
\begin{equation}\label{eem:e33}
\frac{in}{k} = \frac{\rho_1 U_1 + \rho_2 U_2}{\rho_1 + \rho_2} \pm \left[\frac{\rho_1\breve{g}_{1e} - \rho_2\breve{g}_{2e}}{k(\rho_1 + \rho_2)} + \frac{k\gamma}{\rho_1 + 
\rho_2} - \frac{\rho_1\rho_2(U_1 - U_2)^2}{(\rho_1 + \rho_2)^2}\right]^{1/2}
\end{equation}

The form of equations (\ref{eem:e30}), (\ref{eem:e31}), (\ref{eem:e30a}) and (\ref{eem:e31a}) tells that it is possible to choose magnetic and electric fields gradients such that the
effective acceleration due to gravity is zero. Doing so will have an interesting impact on the analysis of section \ref{sec:la}.

\subsection{Analysis of section \ref{sec:la} in presence of electric and magnetic fields}\label{sec:aos}
The main results of the analysis of section \ref{sec:la} are equations (\ref{la:e17}) and (\ref{la:e27}). Writing them together once again -
\begin{eqnarray}
(n_i^2 - (\ccj{n}_i)^2)I_3 &=& (\ccj{n}_i\eta - n_i\ccj{\eta})I_4 \label{la:e17r1}	\\
(\ccj{n}_i - n_i)\left(I_5 + \frac{\breve{g}I_6}{|n_i|^2} - \frac{k^2\gamma I_7}{|n_i|^2}\right) &=& 0 \label{la:e28r2}
\end{eqnarray}
where $\breve{g}$ is the effective acceleration due to gravity in presence of electric ($\breve{g}_e$) or magnetic field ($\breve{g}_m$). It is given by one of the equations 
(\ref{eem:e30}), (\ref{eem:e31}), (\ref{eem:e30a}) and (\ref{eem:e31a}).

Suppose the fluids are viscoelastic and $I_6 > 0$ , that is the density of fluid rises with $z$. At the interface between immiscible fluids, there will always be a tension. Therefore, 
$\gamma$ cannot be zero. On the other hand, we can choose field gradient such that $\breve{g} = 0$ at the interface. In that case, equation (\ref{la:e28r2}) can be satisfied without 
forcing $n_i$ to be real. The real part of $n_i$ then satisfies the quadric equation (\ref{coa:e3}) that has at least one positive root. The corresponding perturbation modes are unstable.
If the fluids were Newtonian, $\breve{g} = 0$ would have guaranteed stability \cite{asj}.

\section{Appendix}
\theoremstyle{plain}
\newtheorem{thm}{Theorem}
\begin{thm}
If the eigenfunction $\pq{w}_i$ belongs to the eigenvalue $n_i$ of equations (\ref{la:e8a}) and (\ref{la:e10}) then ${\pq{\ccj{w}}_i}$ is an eigenfunction of
\begin{eqnarray}
-\bq{\rho}nD\pq{w} &=& k^2\pq{p} - \eta(D^2 - k^2)\pq{w}	\nonumber		\\
D\pq{p}_i &=& -n_i\bq{\rho}\pq{w_i} + \eta(D^2 - k^2)\pq{w}_i + \frac{{g}\pq{w}_iD\bq{\rho}}{n_i}	\nonumber
\end{eqnarray}
belonging to the eigenvalue $\ccj{n}_i$.
\end{thm}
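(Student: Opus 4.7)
The theorem asserts a conjugation symmetry of the coupled eigenvalue problem (\ref{la:e8a})--(\ref{la:e10}): complex-conjugating any solution $(\pq{w}_i,\pq{p}_i;n_i)$ should produce another solution with eigenvalue $\ccj{n}_i$. My plan is simply to conjugate both defining equations term by term and read off the claimed equations; no new machinery is needed beyond the definition of the complex viscosity.

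First I would catalogue the real ingredients. The unperturbed density $\bq{\rho}(z)$ and its derivative $D\bq{\rho}$, the wavenumber $k$, the gravitational acceleration $g$, the surface tension $\gamma$, and the distribution $\delta(z-z_0)$ are all real; the differential operator $D=d/dz$ has real coefficients, so it commutes with complex conjugation, as does $(D^2-k^2)$. The only coefficient that is intrinsically complex is $\ccj{\eta}(n_i)=\int_{0}^{\infty}G(s)\exp(-n_i s)\,ds$, where by (\ref{sp:e10}) the relaxation modulus $G$ is real. Taking the ordinary complex conjugate under the integral gives $\overline{\ccj{\eta}(n_i)} = \int_0^{\infty}G(s)\exp(-\ccj{n}_i s)\,ds$, i.e.\ the same functional form evaluated at $\ccj{n}_i$. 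This is precisely the unbarred $\eta$ that appears on the right-hand sides of the target equations, and it is the one non-mechanical step in the argument; I would accompany it with a brief convergence remark (for any stable relaxation spectrum with spectral abscissa $-\alpha<0$ the identity holds whenever $\Real(n_i)>-\alpha$, which is automatic on the branch of eigenvalues of physical interest).

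With those ingredients in place the rest is purely mechanical. Conjugating (\ref{la:e8a}) term by term sends $\bq{\rho}$, $k^2$ and $D$ through unchanged while mapping $n_i\mapsto\ccj{n}_i$, $\pq{w}_i\mapsto\ccj{\pq{w}}_i$, $\pq{p}_i\mapsto\ccj{\pq{p}}_i$ and $\ccj{\eta}\mapsto\eta$; multiplying by $-1$ produces the first displayed equation of the theorem. The identical operation applied to (\ref{la:e10b}), together with $\overline{1/n_i}=1/\ccj{n}_i$, produces the second. Under conjugation the surface-tension contribution in (\ref{la:e10b}) maps to $-k^2\gamma\,\ccj{\pq{w}}_i\,\delta(z-z_0)/\ccj{n}_i$, so the fact that no such term appears in the theorem's second equation is either a typographical omission or an implicit restriction to the bulk (surface-tension-free) problem; either way its inclusion causes no additional difficulty since $\gamma$ and $\delta(z-z_0)$ are real. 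The only genuine obstacle in the proof is the viscosity identity $\overline{\ccj{\eta}(n_i)}=\eta(\ccj{n}_i)$; once that is secured, the rest is term-by-term bookkeeping.
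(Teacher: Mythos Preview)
Your approach is correct and is essentially the same as the paper's: both proofs simply write out the hypothesis equations and take complex conjugates term by term, using the reality of $\bq{\rho}$, $k$, $g$, etc., together with the observation that conjugating $\ccj{\eta}$ yields $\eta$. Your version is in fact a bit more careful than the paper's---you justify the identity $\overline{\ccj{\eta}(n_i)}=\eta(\ccj{n}_i)$ via the integral definition and you correctly flag the omitted surface-tension term---but the underlying argument is identical.
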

\begin{proof}
By hypothesis
\begin{eqnarray}
-\bq{\rho}n_i D\pq{w}_i &=& k^2\pq{p} - \ccj{\eta}(D^2 - k^2)\pq{w}_i 	\nonumber		\\
D\pq{p}_i &=& -n_i\bq{\rho}\pq{w_i} + \ccj{\eta}(D^2 - k^2)\pq{w}_i + \frac{{g}\pq{w}_iD\bq{\rho}}{n_i}		\nonumber		\\
\end{eqnarray}
Taking complex conjugates of these equations and remembering that $\pq{\rho}$, $\pq{p}$, $\bvec{k}$ are all real functions, we have
\begin{eqnarray}
-\bq{\rho}\ccj{n}_i D{\pq{\ccj{w}}_i} &=& k^2\pq{p} - \eta(D^2 - k^2){\pq{\ccj{w}}_i} 	\nonumber		\\
D\pq{p}_i &=& -\ccj{n}_i\bq{\rho}{\pq{\ccj{w}_i}} + \eta(D^2 - k^2){\pq{\ccj{w}}_i} + \frac{{g}{\pq{\ccj{w}}_i}D\bq{\rho}}{\ccj{n}_i}	\nonumber		\\
& & 
\end{eqnarray}
\end{proof}

\begin{thm}
\begin{equation}\nonumber
\intInfT G(t-t^\op)(D^2 - k^2)\pq{w}dt^\op = \ccj{\eta}(D^2 - k^2)\pq{w},
\end{equation}
where symbols have their usual meanings.
\end{thm}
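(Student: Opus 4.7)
The plan is to exploit the normal-mode ansatz for the perturbation, in which $\pq{w}$ factorizes into a spatial part and an exponential time factor $\exp(nt)$, and reduce the convolution to the definition of the complex viscosity given in equation (\ref{sp:e10}).

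First I would note that the operator $D^2 - k^2 = d^2/dz^2 - k^2$ acts only on the spatial variables, while the integration is over the time variable $t^\op$. Because the normal-mode form $\pq{w}(t^\op,\bvec{x}) = W(z)\exp(nt^\op + i\spr{k}{x})$ separates the dependencies cleanly, applying $(D^2 - k^2)$ produces another function of the same separated form, namely $\tilde{W}(z)\exp(nt^\op + i\spr{k}{x})$ with $\tilde{W} = (D^2-k^2)W$. In particular, $(D^2-k^2)$ commutes with the integral over $t^\op$, so it suffices to prove
\begin{equation}\nonumber
\intInfT G(t-t^\op)\pq{w}(t^\op)\,dt^\op = \ccj{\eta}\,\pq{w}(t),
\end{equation}
since then $(D^2-k^2)$ can be reintroduced on both sides.

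Next I would substitute the normal-mode form, pull the purely spatial factor $W(z)\exp(i\spr{k}{x})$ out of the time integral, and change variables via $s = t - t^\op$, so that $dt^\op = -ds$ and the limits $t^\op \in (-\infty, t]$ map to $s \in [0, \infty)$. This yields
\begin{equation}\nonumber
\intInfT G(t-t^\op)\exp(nt^\op)\,dt^\op = \exp(nt)\int_0^\infty G(s)\exp(-ns)\,ds = \ccj{\eta}\exp(nt),
\end{equation}
by the definition in (\ref{sp:e10}). Multiplying back by the spatial factor recovers $\ccj{\eta}\,\pq{w}(t)$, and reapplying $(D^2-k^2)$ finishes the argument.

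There is no real obstacle; the one thing to state carefully is the convergence of the defining integral for $\ccj{\eta}$, which requires $\Real(n) > 0$ (or at least large enough that $G(s)\exp(-ns)$ is integrable). Under the tacit assumption that the relaxation modulus decays sufficiently fast, this is automatic for the modes of physical interest. Everything else is just the change-of-variables identification of the convolution with the Laplace-type transform that defines $\ccj{\eta}$.
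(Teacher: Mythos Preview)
Your proposal is correct and follows essentially the same route as the paper: substitute the normal-mode form, pull the spatial factor (including the action of $D^2-k^2$) out of the time integral, change variables $s=t-t^\op$, and identify the remaining Laplace-type integral with $\ccj{\eta}$ via equation~(\ref{sp:e10}). The only cosmetic difference is that you first argue $(D^2-k^2)$ commutes with the integral and then reintroduce it, whereas the paper carries it along throughout; your added remark on convergence of the $\ccj{\eta}$ integral is a reasonable caveat the paper leaves implicit.
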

\begin{proof}
The normal mode expansion of $\pq{w}$ is of the form,
\begin{equation}\nonumber
\pq{w} = \pq{\breve{w}}(z)\exp(nt + i\spr{k}{x}).
\end{equation}
Therefore the left hand side (LHS) becomes,
\begin{eqnarray}
LHS &=& \intInfT G(t-t^\op)(D^2 - k^2)\pq{\breve{w}}(z)\exp(nt^\op + i\spr{k}{x})dt^\op	\nonumber 	\\
	&=& (D^2 - k^2)\pq{\breve{w}}(z)\exp(i\spr{k}{x})\intInfT G(t-t^\op)\exp(nt^\op)dt^\op	\nonumber 	
\end{eqnarray}
A transformation $s = t - t^\op$ simplifies the integral to
\begin{equation}\nonumber
\intInfT G(t-t^\op)\exp(nt^\op)dt^\op = \exp(nt)\int_{0}^{\infty}G(s)\exp(-ns)ds = \exp(nt)\ccj{\eta}
\end{equation}
Therefore,
\begin{equation}
LHS = (D^2 - k^2)\pq{\breve{w}}(z)\exp(i\spr{k}{x})\exp(nt)\ccj{\eta} = \ccj{\eta}(D^2 - k^2)\pq{w}
\end{equation}
\end{proof}

Descartes rule of signs, proved in ~\cite{xw}, is:
\begin{enumerate}
\item The rule states that if the terms of a single-variable polynomial with real coefficients are ordered by descending variable exponent, then the number of positive roots of the 
polynomial is either equal to the number of sign differences between consecutive nonzero coefficients, or is less than it by an even number. Multiple roots of the same value are counted 
separately.
\item As a corollary of the point 1, the number of negative roots is the number of sign changes after multiplying the coefficients of odd-power terms by −1, or fewer than it by a multiple 
of 2. 
\end{enumerate}
\bibliographystyle{amsplain}

\end{document}